\title{Direct Inversion for the Squared Bessel Process and Applications}
\author{Simon J. A. Malham, Anke Wiese, and Yifan Xu\\[0.5cm]\small Maxwell Institute for Mathematical Sciences \\\small and School of Mathematical and Computer Sciences \\\small  Heriot-Watt University, Edinburgh EH14 4AS, UK}
\begin{document}

\setlength{\parindent}{0pt}
\setlength{\parskip}{1em}

\maketitle

\begin{abstract}
 In this paper we derive a new direct inversion method to simulate squared Bessel processes.  Since the transition probability of these 
processes can be represented by a non-central chi-square 
distribution, 
we construct an efficient and accurate algorithm to simulate non-central chi-square variables. In this method, the dimension of the squared Bessel process, equivalently the degrees of freedom of the chi-square distribution, is treated as a variable. We therefore use a two--dimensional Chebyshev expansion to approximate the inverse function of the central chi-square distribution with one variable being the degrees of freedom. The method is accurate and efficient for any value of degrees of freedom including the computationally challenging case of small values. 
One advantage of the method is that noncentral
chi-square samples can be generated for a whole range of values of degrees of freedom using the same Chebyshev coefficients. The squared Bessel process is a building block for the well-known Cox-Ingersoll-Ross (CIR) processes, which can be generated from squared Bessel processes through time change and linear transformation. Our direct inversion method thus allows the efficient and accurate simulation of these processes, which are used as models in a wide variety of applications.

\end{abstract}
%\tableofcontents

\section{Introduction}

The squared Bessel process is a one-dimensional diffusion process that plays a fundamental role in modelling economic and financial
variables, see e.g., G{\"o}ing-Jaeschke and Yor \cite{going2003survey}. One important application is that it generates the well-known Cox-Ingersoll-Ross process, a mean-reverting square-root process that serves as a model in a wide variety of applications (see Vere-Jones and Ozaki \cite{vere1982some} in seismology, Hu, Kessler, Rappel and Levine \cite{hu2012input} in biology, and Cox, Ingersoll and Ross \cite{cox2005theory} in finance). In this paper, we focus on its financial applications. The squared Bessel process also appears in describing key properties of Wiener processes, see for example Pitman and Yor 
(\cite{pitman1982decomposition} and \cite{pitman2003infinitely}), and  Pitman and Winkel \cite{pitman2018squared} for an overview. The correspondence between the squared Bessel process and square-root processes, or Cox-Ingersoll-Ross processes, 
is given by linear scaling and time change (see Jeanblanc, Yor, and Chesney \cite{jeanblanc2009mathematical}, and Glasserman and Kim \cite{glasserman2011gamma}). This simple correspondence means that results for 
squared Bessel process can be extended to square-root processes.

The transition probability of the squared Bessel process 
can be represented by a scaled non-central chi-square distribution, where the non-centrality parameter is given by what is known as the dimension of the squared Bessel process -  the drift coefficient in the stochastic differential equation describing the process (see details in Sec 6.2 by Jeanblanc, Yor, and Chesney \cite{jeanblanc2009mathematical}, and Revuz and Yor \cite{revuz2013continuous}). To simulate the squared Bessel process, we thus start with its transition probability density.  

We propose a new direct inversion method to generate non-central chi-square samples, especially for the small degrees of freedom case, a computationally challenging situation 
that is relevant for financial applications, for example foreign exchange markets. Direct inversion was applied in Malham and Wiese \cite{r2}, who derived a new direct inversion method 
to sample Generalized Gaussian random variables and showed that sums of powers of these generate chi-square samples. 
In our new direct inversion method, we use the direct inversion method to simulate the central chi-square samples directly. 
Importantly, we treat the degrees of freedom as a variable. Hence to simulate central chi-square variables, we establish a  two-dimensional Chebyshev approximation with one of the variables being the degrees of freedom of the chi-square distribution. One of the advantages of the two-dimensional Chebyshev approximation is that a series of coefficients from the Chebyshev approximation is therefore valid for a whole range of values of the degrees of freedom. This flexibility in our method means that efficient and accurate pricing of options with different parameter 
values can be achieved. 
Ga\(\ss\), Glau, Mahlstedt and Mair \cite{gass2018chebyshev} applied 
a multi-dimensional Chebyshev method directly to approximate option prices. 
Acceptance-rejection methods to generate non-central chi-square samples were developed 
by Ahrens and Dieter \cite{ahrens1974computer}, by Marsaglia and Tsang \cite{marsaglia2000simple}, and the generalized Marsaglia method by Malham and Wiese \cite{r2}.
The reason we choose the direct inversion method instead of an acceptance-rejection method is that to generate the chi-square samples with small degrees of freedom, acceptance-rejection methods may reject a large number of samples, and this may be inefficient.

Section 2 provides the background to squared Bessel processes; we describe the relationship between the squared Bessel process and the Cox-Ingersoll-Ross process and the transition probability density of the squared Bessel process. In Section 3 we introduce our new direct inversion method to generate non-central chi-square samples, and we include a comparison of the relative errors
in the first ten sample moments generated by
 our new method to other methods. In Section 4 we focus on the application of squared Bessel processes to the pricing of options. We illustrate the accuracy and efficiency of our new method in pricing path-independent put options on exchange rates, path-dependent Asian options and basket-options.
We conclude in Section 5 and provide an outlook. The appendix contains the Chebyshev coefficients for a range of 
values of degrees of freedom.

\section{The squared Bessel process and the CIR process}

This section provides the background for the development 
of our direct inversion method and its application to the Cox-Ingersoll-Ross (CIR) process.
We present the relationship between the squared Bessel process and the CIR process, and we derive the transition probability of the squared Bessel process through the transition probability of the Cox-Ingersoll-Roll process.

The squared Bessel process is the solution of the 
stochastic differential equation
\[
\mathrm{d}Y_{t}=\delta\mathrm{d}t+2\sqrt{Y_{t}}\mathrm{d}B_{t},
\]
where $Y_0\ge 0$, and 
where \(\delta \ge 0\) is a constant, and \(B\) is a Wiener process. For every initial value $Y_0\ge 0$, a unique 
solution to the stochastic differential equation exists, see for example Jeanblanc, Yor and Chesney \cite{jeanblanc2009mathematical}. The parameter $\delta$ is known as the dimension of the squared Bessel process.

 As introduced in Sec 6.3 by Jeanblanc, Yor, and Chesney \cite{jeanblanc2009mathematical}, we define the function $C$ as follows: 

\begin{equation}\label{ct}
    C(t)\coloneqq \frac{c^{2}}{4b}\left(1-e^{-bt}\right),
    \end{equation}
and set \[X_{t}\coloneqq e^{bt}Y_{C(t)},\] where \(a\), \(b\), and \(c\) are positive constants satisfying \(\delta=4a/c^{2}\). 
Applying It{\^o}'s Lemma to \(X_{t}\), we have
\begin{equation}\label{eq:CIR}
    \mathrm{d}X_{t}
=\left(a+bX_{t}\right)\mathrm{d}t+c\sqrt{X_{t}}\mathrm{d}W_{t},
\end{equation}
where $W$ is the Wiener process defined by \(W_{t}\coloneqq 2\int_{0}^{t}(\sqrt{\exp(bs)}/c) \,\mathrm{d}B_{C(s)}\). 

We simulate the squared Bessel process by considering its transition probability density function following Malham and Wiese \cite{r2}, who simulate the CIR process by using its  transition probability density function. 
The transition probability of the squared Bessel process is shown in Theorem \ref{t1}, quoted from Chapter XI in Revuz and Yor 
\cite{revuz2013continuous}, who 
calculate this transition probability through the Laplace transform (for details see also Sec 6.2 in  Jeanblanc, Yor, and Chesney \cite{jeanblanc2009mathematical}). 
Alternatively, the relationship between the squared Bessel process and the CIR process means that we can derive 
the transition probability density of the squared Bessel process through the known transition probability density of the CIR process. 

\newtheorem{Theorem}{Theorem}[section]
\newtheorem{Definition}{Definition}[section]
\begin{Definition}{\label{d1}}
The distribution funcion $F_{\chi_{\delta}^{2}(\lambda)} $ for 
a non-central chi-square random variable 
is given for \(y\in \mathbb{R}\) by
\[
F_{\chi_{\delta}^{2}(\lambda)}(y)=\frac{e^{-\lambda/2}}{2^{\delta/2}}\sum_{i=0}^{\infty}\frac{(\lambda/2)^{i}}{i!2^{i}\Gamma(\delta/2+i)}\int_{0}^{y}x^{\delta/2+i-1}e^{-x/2}\mathrm{d}x,
\]
where \(\delta\) is the degrees of freedom, \(\lambda\) is the non-centrality parameter, and \(\Gamma(\cdot)\) is the standard gamma function.

\end{Definition}

\newtheorem{Proposition}{proposition}
\begin{Theorem}{\label{t1}} 
Let $Y$ be a squared Bessel process of dimension $\delta$.
Then for any \(u_{n+1}>u_{n}\), given \(Y_{u_{n}}\)  the conditional transition probability $P\left(Y_{u_{n+1}}<y|Y_{u_{n}}\right)$  is given as \((u_{n+1}-u_{n})\) times the non-central chi-square distribution with degrees of freedom \(\delta\) and non-centrality parameter 
\(\lambda:={Y_{u_n}}/(u_{n+1}-u_n)\), i.e., the conditional transition probability can be written as 
\[
P\left(Y_{u_{n+1}}<y|Y_{u_{n}}\right)=F_{\chi^{2}_{\delta}(\lambda)}\left(\frac{y}{u_{n+1}-u_{n}}\right).
\]
\end{Theorem}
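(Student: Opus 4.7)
The plan is to derive the transition law of the squared Bessel process $Y$ from the known transition law of the CIR process $X$, exploiting the scaling and time change $X_t = e^{bt}Y_{C(t)}$ with $C(t)$ given by \eqref{ct} and $\delta = 4a/c^2$ that was established in the preceding paragraphs. Inverting this identity should translate the CIR non-central chi-square transition directly into the form claimed.

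First, I would invoke the standard fact that, conditional on $X_{t_n} = x_n$, the random variable $X_{t_{n+1}}$ is distributed as $K\cdot Z$, where $Z\sim\chi^2_\delta(\Lambda)$ and
\[
K=\frac{c^{2}\left(e^{b(t_{n+1}-t_n)}-1\right)}{4b},\qquad \Lambda=\frac{4b\,x_n\,e^{b(t_{n+1}-t_n)}}{c^{2}\left(e^{b(t_{n+1}-t_n)}-1\right)}.
\]
Next, setting $t_i = C^{-1}(u_i)$, the scaling relation gives $Y_{u_{n+1}} = e^{-bt_{n+1}}X_{t_{n+1}}$ and converts the conditioning on $Y_{u_n}$ into conditioning on $X_{t_n}=e^{bt_n}Y_{u_n}$. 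Hence
\[
P\!\left(Y_{u_{n+1}}<y\mid Y_{u_n}\right)=P\!\left(X_{t_{n+1}}<e^{bt_{n+1}}y\;\Big|\;X_{t_n}=e^{bt_n}Y_{u_n}\right)=F_{\chi^2_\delta(\Lambda)}\!\left(\frac{e^{bt_{n+1}}y}{K}\right).
\]

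The heart of the proof is then a short algebraic simplification of the scale and non-centrality. Pulling $e^{-bt_{n+1}}$ through $K$ yields
\[
e^{-bt_{n+1}}K=\frac{c^{2}\left(e^{-bt_n}-e^{-bt_{n+1}}\right)}{4b}=C(t_{n+1})-C(t_n)=u_{n+1}-u_n,
\]
and substituting $x_n=e^{bt_n}Y_{u_n}$ into $\Lambda$ together with the same identity collapses it to $\Lambda=Y_{u_n}/(u_{n+1}-u_n)$. Therefore the argument of $F_{\chi^2_\delta}$ becomes $y/(u_{n+1}-u_n)$ and the non-centrality becomes $Y_{u_n}/(u_{n+1}-u_n)$, which is exactly the stated claim.

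The principal obstacle is really a prerequisite: one must accept the CIR transition law in closed form for the SDE \eqref{eq:CIR} with positive $b$. If one wishes to avoid this, the self-contained alternative is the Laplace transform computation in Chapter XI of Revuz and Yor, where one derives $\mathbb{E}[\exp(-\alpha Y_t)\mid Y_0]$ directly from the squared Bessel generator by solving a simple ODE and recognises the result as the Laplace transform of a scaled non-central chi-square distribution with precisely the parameters appearing in Theorem~\ref{t1}. Both routes produce the same identification, and the time-change reduction is the cleaner of the two once \eqref{ct} and the CIR density are in hand.
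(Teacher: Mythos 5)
Your proposal is correct and follows essentially the same route as the paper: both derive the result by quoting the non-central chi-square transition law of the CIR process and then transferring it to $Y$ via the scaling and time change $X_t=e^{bt}Y_{C(t)}$. You in fact supply the algebraic simplification of the scale factor and non-centrality (showing $e^{-bt_{n+1}}K=C(t_{n+1})-C(t_n)=u_{n+1}-u_n$) that the paper leaves implicit in its final sentence.
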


\begin{proof}
Consider a CIR process that satisfies the following SDE:
\begin{equation}\label{e1}
    \mathrm{d}X_{t}=\bigl(a+bX_{t}\bigr)\mathrm{d}t+c\sqrt{X_{t}}\mathrm{d}W_{t},
\end{equation}

where $a\geq 0$, $b\in \mathbb R$, and $c>0$ are constants, and \(W\) is a Wiener process. We quote the transition probability of the CIR process from Cox et al. \cite{cox2005theory},  Andersen\cite{andersen2008simple}  and Malham and Wiese \cite{r2}. Here we set \(\delta\coloneqq 4a/c^{2}\), and define the function \(\eta(h)\) as follows:
\[\eta(h):=-4b\exp (bh)/c^{2}(1-\exp(bh)),\] 
where \(h\coloneqq t_{n+1}-t_{n}\) for \(t_{n+1}>t_{n}\). Then we let \(\lambda\coloneqq X_{t_{n}} \eta(h)\). The transition probability of the CIR process from \(X_{t_{n}}\) to \(X_{t_{n+1}}\) is
\begin{equation}\label{e2}
    P\left(X_{t_{n+1}}<x|X_{t_{n}}\right)=F_{\chi^{2}_{\delta}(\lambda)}(x \cdot\eta(h)/\exp(b h)).
    \end{equation}

The relationship between the CIR process and the squared Bessel process in equation \eqref{eq:CIR} states that 
\(X_{t}=\exp(bt)\cdot Y_{C(t)}\).
Using this transformation and the transition probability for $X$, the transition probability
for $Y$ is that as stated in the theorem.
\end{proof}

To simulate the squared Bessel process from timestep \(u_{n}\) to \(u_{n+1}\), we can generate a non-central chi-square sample with parameter \(\lambda=Y_{u_{n}}/(u_{n+1}-u_{n})\), then we have \(Y_{u_{n+1}}=(u_{n+1}-u_{n})\cdot\chi_{\delta}^{2}(\lambda)\).

\section{Chi-square sampling}
In order to design an efficient method to generate chi-square samples, we generate central chi-square samples first, and then use these 
to generate the non-central chi-square samples; see Malham and Wiese \cite{r2}. We propose a new direct inversion method to generate the central chi-square samples in this paper.

\subsection{Central chi-square sampling}
In our direct inversion method, a tensored two-dimensional Chebyshev polynomial expansion is used to approximate the inverse central chi-square distribution function \(F^{-1}_{\delta}(u)\).
To approximate the inverse function accurately, we divide the range of 
\(F^{-1}_{\delta}(u)\) into several regions based on the behaviour of the shape of its cumulative function.
This is motivated by Moro \cite{fullmonte} and Malham and Wiese \cite{r2} who split the range of the inverse function of the generalized Gaussian distribution into three regions based on the shape of the cumulative function to improve the approximation accuracy when using the direct inversion method to generate generalized Gaussian samples. Malham and Wiese \cite{r2} use inflection points to determine how to divide the domain. 
However, for the chi-square distribution there are no positive real roots of
 \(F'''_{\delta}(w)=0\) and  \(F''''_{\delta}(w)=0\).
Hence, we choose the two points \(w_{-}\) and \(w_{+}\) to divide the domain of the cumulative function \(F_{\delta}(w)\) into three regions by observing the behaviour of the chi-square cumulative function. In the first region, we have \(w\in[0, w_{-}]\) or \(u\in[0, F_{\delta}( w_{-})]\), where the cumulative function shows a sharp increasing trend with a large slope. In the middle region, we have \(w\in[w_{-}, w_{+}]\) or \(u\in[F_{\delta}( w_{-}), F_{\delta} (w_{+})]\), where the slope of the cumulative function decreases from a very large value to almost zero. In the tail region, we have \(w\in[w_{+}, \infty)\) or \(u\in[F_{\delta}(w_{+}),1)\), where the cumulative function is flat and nearly equals one. 
To approximate the inverse central chi-square function \(F_{\delta}^{-1}(w)\) in each region, we use a tensored two-dimensional Chebyshev polynomial expansion. One potential issue is that since the parameter \(\delta\) is treated as one of the variables in the two-dimensional Chebyshev approximation, the boundary of each of the regions for the inverse variable \(u\) 
depends on the variable \(\delta\). 
This contrasts with the boundaries in a one-dimensional Chebyshev approximation. However, the regions for $u$ are uniquely determined for each value of $\delta$.  
We also divide the range of the variable $\delta$, the degrees of freedom, in several intervals to achieve the 
accuracy as required.  
In what  follows,  we denote by \([c,d]\) such an interval for 
$\delta$,  where \(c\) and \(d\) are constants. We establish the two-dimensional Chebyshev approximation for each of the three regions of $u$.

\paragraph{Two-dimensional Chebyshev approximation.}
The inverse chi-square function \(F_{\delta}^{-1}(u)\) can be approximated as follows:
\[
    F_{\delta}^{-1}(u)\approx\sum_{m=0}^{M}\sum_{n=0}^{N}c_{mn}T_{m}\left(\frac{2\delta-(d+c)}{d-c}\right)T_{n}(k_{1}\xi(u)+k_{2}),
    \]
where \(T_{m}\) and $T_n$ are, respectively, the degree \(m\) and $n$ Chebyshev polynomials. 
Here $\xi$ is a scaling function, specified below, that depends 
in the region for $u$. The linear transformation 
$ 2\delta-(d+c)/(d-c)$ in $\delta$ and $k_1$ and 
$k_2$ are chosen to map the regions for $\delta$ and for $u$ 
to the interval $[-1, +1]$, on which the Chebyshev polynomials
are defined.
The values of \(M\) and \(N\) depend on the value of \(\delta\). A small value of \(\delta\) will lead to large values of \(M\) and \(N\). That means when the Chebyshev approximation is for chi-square samples with a small degree of freedom, more coefficients are needed to achieve the prescribed approximation accuracy. Motivated by Moro \cite{fullmonte} and Malham and Wiese \cite{malham2014efficient}, we define the scaled and shifted variable \(k_{1}\xi(u)+k_{2}\) for each region as follows. The function \(\xi(u)\) is defined as 
\begin{equation}\label{xi}
    \xi(u)=
\begin{cases}
u, \hspace{5mm}& u\in[0, F_{\delta}(w_{-})),\\
\log((1-u)\Gamma(\delta/2)), & u\in[F_{\delta}( w_{-}), F_{\delta} (w_{+})),\\
\log(-\log((1-u)\Gamma(\delta/2))), & u\in [F_{\delta}(w_{+}),1-10^{-8}].\\
\end{cases}\end{equation}
We choose \(1-10^{-8}\) to be the upper boundary of the tail region.
We label \(\xi^{(1)}(u)\), \(\xi^{(2)}(u)\), and \(\xi^{(3)}(u)\) for \(\xi(u)\) in the first, middle, and tail regions, 
and similarly 
for each of the three regions, we also label 
\(k_{j}^{(1)}\), \(k_{j}^{(2)}\), and \(k_{j}^{(3)}\) for each $k_j$,
$j=1,\, 2$. 
The values of \(k_{1}^{(i)}\), and \(k_{2}^{(i)}\) for \(i=1, 2, 3\) are chosen such that
\begin{enumerate}
    \item[(i)] \(k_{1}^{(1)}\xi^{(1)}(0)+k_{2}^{(1)}=-1\), and \(k_{1}^{(1)}\xi^{(1)}(F_{\delta}(w_{-}))+k_{2}^{(1)}=1\);
    \item[(ii)] \(k_{1}^{(2)}\xi^{(2)}(F_{\delta}(w_{-}))+k_{2}^{(2)}=-1\), and \(k_{1}^{(2)}\xi^{(2)}(F_{\delta}(w_{+}))+k_{2}^{(2)}=1\);
    \item[(iii)] \(k_{1}^{(3)}\xi^{(3)}(F_{\delta}(w_{+}))+k_{2}^{(3)}=-1\), and \(k_{1}^{(3)}\xi^{(3)}(F_{\delta}(1-10^{-8}))+k_{2}^{(3)}=1\).    
    \end{enumerate}
 
\paragraph{Coefficients of the two-dimensional Chebyshev expansion}
 As in Trefethen \cite{trefethen} and Scheiber \cite{chebyshev}, the coefficients of the two-dimensional Chebyshev expansion can be calculated by the following formula:
 \[
c_{mn}=
\begin{cases}
\frac{1}{\pi^{2}}I_{mn},\hspace{5mm}&m=0, n=0,\\
\frac{2}{\pi^{2}}I_{mn}, & m=1,n=0,\\
\frac{2}{\pi^{2}}I_{mn}, & m=0,n=1,\\
\frac{4}{\pi^{2}}I_{mn}, &m\geq1,n\geq1,
\end{cases}
\]
 with 
\[
\begin{split}   
I_{mn}=\int_{-1}^{1}\int_{-1}^{1}F^{-1}\left(\delta,u\right)\frac{T_{m}(\alpha)T_{n}(x)}{\sqrt{1-\alpha^{2}}\sqrt{1-x^{2}}}\mathrm{d}x\mathrm{d}\alpha,
\end{split}
\]
where the inverse chi-square function is written as a two-variable function \( F^{-1}(\delta,u)\coloneqq
 F_{\delta}^{-1}(u)\). 
However, the inverse function \(F^{-1}(\delta,u)\) is unknown. Therefore, to calculate the coefficients, 
the inverse function \(F^{-1}(\delta, u)\) needs to be substituted when calculating the integral \(I_{mn}\).

Let \(x\coloneqq k_{1}\xi(u)+k_{2}\). Based on the function \(\xi(u)\) defined in equation (\ref{xi}), we have
\begin{enumerate}
    \item[(i)] \(u=\left(x-k_{2}^{(1)}\right)/k_{1}^{(1)}\) in the first region,
    \item[(ii)] \(u=1-\exp\left(\left(x-k_{2}^{(2)}\right)\Big/k_{1}^{(2)}\right)\Big/\Gamma(\delta/2)\) in the middle region,
    \item[(iii)] \(u=1-1\Big/\Gamma(\delta/2)\exp\left(\exp\left(\left(x-k_{2}^{(3)}\right)\Big/k_{1}^{(3)}\right)\right) \)     in the tail region.
    \end{enumerate}

 For \(\delta\in [c,d]\), let \(\alpha\coloneqq\left(2\delta-(d+c)\right)/(d-c)\), so 
that \(\delta=\left((d-c)\alpha+(d+c)\right)/2\), and define $\eta^{(i)}(x, \delta)\coloneqq (x-k_2^{(i)})/k_1^{(i)}$ for $i=1, 2, 3$. 
Let \(I_{mn}^{(1)}\), \(I_{mn}^{(2)}\), and \(I_{mn}^{(3)}\) 
denote the integral \(I_{mn}\) of the coefficients \(c_{mn}\) in the three regions. We have the following expressions for the integral:
\[
\begin{split}
I_{mn}^{(1)}&=\int_{-1}^{1}\int_{-1}^{1}F^{-1}\left(\frac{(d-c)\cdot\alpha+(d+c)}{2},\eta^{(1)}(x, \delta)\right)\cdot\frac{T_{m}(\alpha)T_{n}(x)}{\sqrt{1-\alpha^{2}}\sqrt{1-x^{2}}}\mathrm{d}x\mathrm{d}\alpha, \\
I_{mn}^{(2)}&=\int_{-1}^{1}\int_{-1}^{1}F^{-1}\left(\frac{(d-c)\alpha+(d+c)}{2},1-\frac{\exp\left(\eta^{(2)}(x, \delta)\right)}{\Gamma(\delta/2)}\right)\cdot\frac{T_{m}(\alpha)T_{n}(x)}{\sqrt{1-\alpha^{2}}\sqrt{1-x^{2}}}\mathrm{d}x\mathrm{d}\alpha,\\
I_{mn}^{(3)}&=\int_{-1}^{1}\int_{-1}^{1}F^{-1}\left(\frac{(d-c)\alpha+(d+c)}{2},1-\frac{1}{\Gamma(\delta/2)\exp\left(\exp\left(\eta^{(3)}(x, \delta)\right)\right)}\right)\cdot \frac{T_{m}(\alpha)T_{n}(x)}{\sqrt{1-\alpha^{2}}\sqrt{1-x^{2}}}\mathrm{d}x\mathrm{d}\alpha.
\end{split}
\]

Substituting \(u\) and \(\delta\) for \(x\) and \(\alpha\) yields 
\[
\begin{split}
 I_{mn}^{(i)}=\int_{c}^{d}G_{\delta}^{(i)}(u)\cdot\frac{T_{m}\left(\frac{2\delta-(d+c)}{d-c}\right)}{\sqrt{\delta(d+c)-\delta^{2}-dc}}\mathrm{d}\delta
\end{split}
\]
for \(i=1, 2,  3\). The function \(G_{\delta}^{(i)}(u)\) is defined as follows:
\[
\begin{split}
G_{\delta}^{(1)}(u)&\coloneqq\int_{0}^{F_{\delta}(w_{-})}F^{-1}(\delta,u)\frac{T_{n}\left(k_{1}^{(1)}\xi(u)+k_{2}^{(1)}\right)}{\sqrt{1-\left(k_{1}^{(1)}\xi(u)+k_{2}^{(1)}\right)^{2}}}\mathrm{d}u, \\
 G_{\delta}^{(2)}(u)&\coloneqq\int_{F_{\delta}(w_{-})}^{F_{\delta}(w_{+})}F^{-1}(\delta,u)\cdot\frac{k_{1}^{(2)}\cdot T_{n}\left(k_{1}^{(2)}\xi(u)+k_{2}^{(2)}\right)}{(u-1)\cdot \xi(u)\cdot\sqrt{1-\left(k_{1}^{(2)}\xi(u)+k_{2}^{(2)}\right)^{2}}}\mathrm{d}u, \\
 G_{\delta}^{(3)}(u)& \coloneqq\int_{F_{\delta}(w_{+})}^{1-10^{-8}}F^{-1}(\delta,u)\cdot \frac{k_{1}^{(3)}\cdot T_{n}\left(k_{1}^{(3)}\xi(u)+k_{2}^{(3)}\right)}{(1-u)\exp(\xi(u))\sqrt{1-\left(k_{1}^{(3)}\xi(u)+k_{2}^{(3)}\right)^{2}}}\mathrm{d}u,
 \end{split}
\]
where \(\xi(u)\) is defined in equation (\ref{xi}).

 Let \(w\coloneqq F^{-1}_{\delta}(u)\), so that 
\(F_\delta(w)=F_{\delta}\left(F^{-1}_{\delta}(u)\right)=u\), and 
 \(g_{\delta}^{(i)}(w)\coloneqq G_{\delta}^{(i)}(F_{\delta}(w))\). Then we have
\[
\begin{split}
g_{\delta}^{(1)}(w)&=\int_{0}^{w_{-}}w\cdot\frac{ T_{n}\left(k_{1}^{(1)}\xi(F_{\delta}(w))+k_{2}^{(1)}\right)f_{\delta}(w)}{\sqrt{1-\left(k_{1}^{(1)}\xi(u)+k_{2}^{(1)}\right)^{2}}}\mathrm{d}w, \\
    g_{\delta}^{(2)}(w)&=\int_{w_{-}}^{w_{+}}w\cdot\frac{k_{1}^{(2)} \cdot T_{n}\left(k_{1}^{(2)}\xi(F_{\delta}(w))+k_{2}^{(2)}\right)f_{\delta}(w)}{(F_{\delta}(w)-1)\sqrt{1-\left(k_{1}^{(2)}\xi(F_{\delta}(w))+k_{2}^{(2)}\right)^{2}}}\mathrm{d}w,\\
    g_{\delta}^{(3)}(w)&=\int_{w_{+}}^{F_{\delta}^{-1}(1-10^{-8})}w\cdot\frac{k_{1}^{(3)} \cdot T_{n}\left(k_{1}^{(3)}\xi(F_{\delta}(w))+k_{2}^{(3)}\right)f_{\delta}(w)}{(1-F_{\delta}(w))\exp(\xi(F_{\delta}(w)))\sqrt{1-\left(k_{1}^{(3)}\xi(F_{\delta}(w))+k_{2}^{(3)}\right)^{2}}}\mathrm{d}w.
    \end{split}
\]

However, the inverse value \(F_{\delta}^{-1}(1-10^{-8})\) for the upper boundary of the integral above is unknown. We choose a large value, here we choose \(20\), such that \(F_{\delta}(20)>1-10^{-8}\), and use an indicator function to bound the inner function of the integral in the tail region as follows:
\[
\begin{split}
   g_{\delta}^{(3)}(w)=\int_{w_{+}}^{20}\mathbb{I}_{(F_{\delta}(w)\leq 1-10^{-8})}\cdot w \cdot \frac{k_{1}^{(3)} \cdot T_{m}\left(k_{1}^{(3)}\xi(F_{\delta}(w))+k_{2}^{(3)}\right)f_{\delta}(w)}{(1-F_{\delta}(w))\exp(\xi(F_{\delta}(w)))\sqrt{1-\left(k_{1}^{(3)}\xi(F_{\delta}(w))+k_{2}^{(3)}\right)^{2}}}\mathrm{d}w.
    \end{split}
\]

The integrals are now transformed to a representation that is suitable for  
the calculation of the coefficients of the two-dimensional Chebyshev expansion in the three regions.
 These coefficients in the three regions are calculated using  Gauss-Konrod quadrature approximation using MATLAB with double precision. The calculation accuracy of the double integration of the coefficients is of order $10^{-8}$. 

\theoremstyle{definition}
\newtheorem{remark}{Remark}
\begin{remark}
The accuracy is influenced by the presence of singularities. 
When the chi-square variable \(w\) is close to the boundary 
of each region, or when the variable \(\delta\) is close to the upper and lower boundaries \(c\) and \(d\), 
the integrand tends to infinity, which decreases the accuracy in the calculation of the double integrals. The influence of \(w\) is greater than that of \(\delta\) in the calculation of the double integral. To improve the accuracy, one potential way is to increase our double precision (16 digits) arithmetic to a larger digits arithmetic (for example \(25\) digits). 
\end{remark}
 We have generated coefficients for different degrees of freedom by choosing different boundaries \(c\) and \(d\) for \(\delta\). In Table \ref{coefficient}, we show the number of coefficients required in the three different regions to achieve \(O(10^{-8})\) accuracy. 
Our direct inversion method achieves high accuracy approximation for any values of \(\delta\), especially for 
the challenging case of small values of \(\delta\). However, higher orders of Chebyshew polynomials are required for small 
\(\delta\).
 In the Appendix, we list the series of coefficients in the first region, middle region, and tail region for \(\delta\in \left[0.1,0.2\right]\).

\begin{table}[H]
\caption{We show the orders of two-dimensional Chebyshev polynomials required to achieve \(O(10^{-8})\) approximation accuracy. In each case, the pair shown is \((M, N)\), the orders of Chebyshev polynomials for the variables \(\delta\) and \(u\). The total number of Chebyshev coefficients in each region is \((M+1)\times (N+1)\).
}
\centering
\begin{tabular}{|l|l|l|l|}

\hline
 \qquad \(\delta\)     & First region   & Middle region     & Tail region              \\ \hline
\([0.1, 0.2]\) & \((5, 13)\) & \((3,11)\) & \((3,13)\) \\ \hline
\([0.01,0.02]\) & \((4, 33)\)     & \((3,13)\)         & \((3,12)\)        \\ \hline
\([0.001, 0.002]\)      & \((4,39)\)       & \((2,13)\)        & \((3,13)\)                          \\ \hline
\end{tabular}

\label{coefficient} 
\end{table}

\paragraph{Two-dimensional Clenshaw's recurrence}
To generate central chi-square samples, we need to evaluate the sum of the two-dimensional Chebyshev series. We denote
\[
S(x,y)= \sum_{m=0}^{M}\sum_{n=0}^{N}c_{m n}T_{m}\left(y\right)T_{n}\left(x\right)
\]
the Chebyshev polynomial in our approxmation for $F_\delta^{-1}$.
Recall \(x\) is the scaled variable of \(\delta\) with \(x=\left(2\delta-(d-c)\right)/(d+c)\), and \(y\) is the scaled and shifted variable of \(u\) with different scaling as specified in equation \eqref{xi} in the first region, the middle region and the tail region. A one-dimensional Chebyshev approximation can be evaluated efficiently using Clenshaw's recurrence formula (see Sec 5.8 in Press et al \cite{press2007numerical}). As in Basu \cite{basu1973double}, to evaluate the two-dimensional Chebyshev series efficiently, we use Clenshaw's recurrence for the variable \(y\) first and then again use the Clenshaw's recurrence formula for the second  variable \(x\). The sum of all the terms of the Chebyshev series can be written as follows:
\[
 S(x,y)=\sum_{m=0}^{M}\sum_{n=0}^{N}c_{m n}T_{m}\left(y\right)T_{n}\left(x\right)=\sum_{n=0}^{N}a_{n}T_{n}(x),
\]
where\[
a_{n}\coloneqq \sum_{m=0}^{M}c_{mn}T_{m}(y).
\]
The Chevyshev polynomials \(T_{m}(y)\) and \(T_{n}(x)\) have the recurrence relation as follows:\[
T_{m}(y)-2yT_{m-1}(y)+T_{m-2}(y)=0,
\]
and \[
T_{n}(x)-2xT_{n-1}(x)+T_{n-2}(x)=0.
\]
Define the quantities \(d_{mn}\) with \(m=M,M-1, \cdots, 1\) and \(n=N,N-1, \cdots, 1\) by the following recurrence: 
\[
\begin{split}
&d_{M+2,n}=d_{M+1,n}=0,\\
  &  d_{mn}-2yd_{m+1,n}+d_{m+2,n}=c_{mn}.
    \end{split}
\]

By solving the recurrence formula above we obtain the expression of \(d_{1n}\) and \(d_{2n}\). Then we have
 \[
a_{n}=yd_{1n}-d_{2n}+c_{0n}.
\]
Similarly, define the quantities \(g_{n}\) with \(n=N,N-1, \cdots, 1\) to solve the following recurrence: \[
\begin{split}
   & g_{N+2}=g_{N+1}=0,\\
  &  g_{n}-2xg_{n+1}+g_{n+2}=a_{n} .  
    \end{split}
\]
Solving this recurrence to obtain \(g_{1}\) and \(g_{2}\), we then have the expression  
\[
S(x,y)=xg_{1}-g_{2}+a_{0}.
\]

\subsection{Non-central chi-square sampling}\label{ssec:noncentral}
 The method we use to generate the non-central chi-square samples using the central chi-square samples is taken from Malham and Wiese \cite{r2}. We describe this method briefly for completeness.

We distinguish two cases for the non-centrality parameter $\lambda$: the case \(\lambda\leq10\) and the case is \(\lambda> 10\).
When $\lambda \leq 10$,  then the non-central chi-square random variable \(\chi_{\delta}^{2}(\lambda)\) 
is decomposed as \(\chi_{\delta}^{2}(\lambda)\sim \chi_{\delta}^{2}+\chi_{2N}^{2}\) (see Siegel \cite{siegel1979noncentral}, and Johnson \cite{johnson1959extension}). Here \(\chi_{\delta}^{2}\) can be generated by our direct inversion method proposed in the last section, and \(\chi_{2N}^{2}\) is a central chi-square random variable with degrees of freedom \(2N\), where $N$ is a Poisson distributed random variable with mean \(\lambda/2\). We use the method in Glasserman \cite{glasserman2004monte} to generate \(\chi_{2N}^{2}\).  First we generate the Poisson distributed random variable \(N\), 
and then we generate \(N\) independent uniform random variables \(U_{1}, \dots, U_{N}\). 
Then  \(-2(\log(U_{1})+\dots+\log(U_{N}))\) has a 
\(\chi_{2N}^{2}\) distribution, and
\[
\chi_{\delta}^{2}(\lambda) \sim
-2(\log(U_{1})+\dots+\log(U_{N}))+Z,
\] 
where \(Z\) is the \(\chi_{\delta}^{2}\) random variable.

When \(\lambda>10\),
 the non-central chi-square random variable \(\chi_{\delta}^{2}(\lambda)\) is decomposed as 
 \(\chi_{\delta}^{2}(\lambda)\sim \chi_{\delta+2N}^{2}\sim \chi_{\delta+2\Bar{N}+2P}^{2}\), using that \(N\) is a Poisson random variable, which can be expressed as a sum of two independent Poisson random variables \(\Bar{N}\) and \(P\) with mean \(10/2\) and mean \(\lambda/2-10/2\), respectively. We generate a sample of the Poisson random variable \(\Bar{N}\), and consider the value of \(\Bar{N}\). 
If \(\Bar{N}\neq 0\), then we generate \(\Bar{N}-1\)
independent uniform random variables \(U_{1},\dots, U_{\Bar{N}-1}\) and two independent standard normal random variables \(V_{1}\) and \(V_{2}\). Then by the additivity property  of non-central chi-square random variables, we have \(\chi_{\delta}^{2}(\lambda)\sim -2(\log(U_{1})+\dots+\log(U_{\Bar{N}-1}))+V_{1}^{2}+(V_{2}+\sqrt{\lambda-10})^{2}+Z\), where \(Z\) is the \(\chi_{\delta}^{2}\) random variable. When \(\Bar{N}=0\), we set \(\bar{\lambda}=\lambda-10\). If \(\bar{\lambda}>10\), we  repeat the process above, otherwise for \(\bar{\lambda}\leq 10\) we use the algorithm of generating the non-central chi-square random variable with the non-centrality parameter smaller than \(10\) to generate the \(\chi_{\delta}^{2}(\bar{\lambda})\) random variable.

 \subsection{Comparison}
We illustrate the accuracy of our direct inversion method by considering the relative errors in the first ten sample moments of the simulated non-central chi-square samples.
 In Figure \ref{fig:momenterror} these relative errors 
 are displayed for three values of the degrees of freedom \(\delta=0.1, 0.01,0.001\) (from top to bottom), and different values of the non-centrality parameter \(\lambda\). The number of samples we use in each simulation is \(5\times 10^{7}\). 
As can be seen in the six panels, the direct inversion method  achieves conistently high accuracy among the different values of the degrees of freedom \(\delta\) and non-centrality parameter \(\lambda\). The relative errors are dominated by the 
Monte Carlo error.  Comparing Figure \ref{fig:momenterror} with Figure 4 in Malham and Wiese \cite{r2}, where they show the relative errors in the first ten moments for the non-central chi-square samples simulated by three acceptance-rejection methods (Ahrens-Dieter \cite{ahrens1974computer}, Marsaglia-Tsang \cite{marsaglia2000simple}, and Generalized  Marsaglia \cite{r2}), the direct inversion method by Malham and Wiese \cite{r2}), and the method by Andersen \cite{andersen2008simple}, we observe that our direct inversion method has the same high order of accuracy.

\begin{figure}[H] 
\centering 
\includegraphics[width=1\textwidth]{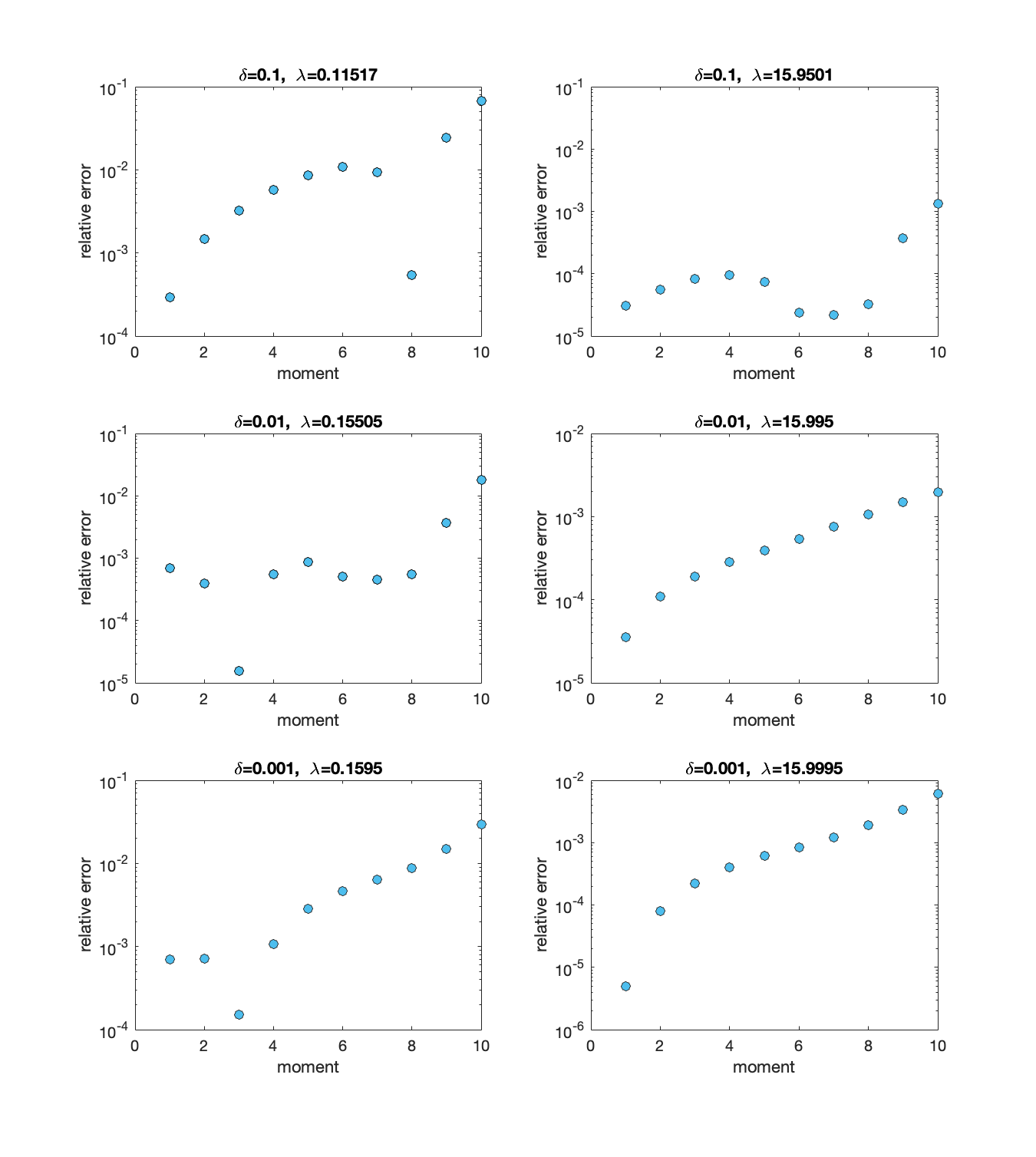} 
\caption{We show the relative errors in the first ten moments of the non-central chi-square samples using the direct inversion method. The simulations correspond to different values of degrees of freedom \(\delta\) and non-centrality parameter \(\lambda\). We test three values of degrees of freedom with \(\delta=0.1, 0.01, 0.001\), and the cases of \(\lambda\leq10\) and \(\lambda>10\) with values of \(\lambda=0.11517, 0.15505, 0.1595, 15.9501, 15.995, 15.9995\) (from top to bottom, and then from left to right).  The number of samples in each case is \(5\times 10^{7}\).}
\label{fig:momenterror} 
\end{figure}

 \section{Applications}

 We apply our direct inversion method to the pricing of a standard European put option, an Asian option, and a basket option to show the efficiency of our method for path-independent, path-dependent and multi-asset options.

 \subsection{Path-independent options: a put option on an exchange rate}
As an example of a path-independent option, we price a put option on an exchange rate that is modelled as a CIR process. First, we use our direct inversion method to simulate the squared Bessel process, which then generates the CIR process. The payoff of the put option is
\[
\max (K-X_{T}, 0),
\]
where \(K\) is the strike price, and \(X_{T}\) is the exchange rate at the expire time \(T\).  
The price  \(C_{0}\) of the put option can be calculated as follows: \[
C_{0}=\mathrm{E}\left[\max (K-X_{T}, 0)\right]=\int_{0}^{K} (K-x) f(x)\mathrm{d}x,
\]
where \(f\) is the probability density function. The CIR process \(X\) is given as in equation (\ref{e1}) by
\[
    \mathrm{d}X_{t}=\bigl(a+bX_{t}\bigr)\mathrm{d}t+c\sqrt{X_{t}}\mathrm{d}W_{t},
\]
and its transition probability is shown in equation (\ref{e2}). Hence the density function can be calculated to be
\[
 f(x)=f_{\chi^{2}_{\delta}(\lambda)}(x \cdot\eta(h)/\exp(b h))\cdot\eta(h)/\exp(b h),
 \]
where \(f_{\chi^{2}_{\delta}(\lambda)}(z)\) is the density function of the non-central chi-square distribution. 
Therefore, the exact price of the put option can be calculated. Let \(\hat{X}_{T}\) denote the approximated asset price. Then 
\(\hat{C}_{0}\) can be written as the expectation 
\[
\hat{C}_{0}=\mathrm{E}\left[\max (K-\hat{X}_{T}, 0)\right].
\]
The relative error of the approximation is defined as follows:
\[
e\coloneqq\frac{|C_{0}-\hat{C}_{0}|}{C_{0}}.
\]
We investigate the efficiency of our direct inversion method by considering the relative error, the CPU times, and the length of the timestep in the path simulation. 
As a comparison, we choose the full truncation Euler scheme by Lord, Koekkoek and Dijk \cite{lord2010comparison}, and the quadratic-exponential scheme with martingale-correction by Anderson \cite{andersen2008simple}. The full truncation Euler method is developed from the Euler method to avoid negative values when generating samples of the CIR process. The quadratic-exponential scheme with martingale-correction (QE-M) is an efficient and robust method to generate samples of the CIR process for small values for the degrees of freedom (see details in Anderson \cite{andersen2008simple}), consistent with our case. 
 
Let $Y$ be 
the squared Bessel process \[
\mathrm{d}Y_{t}=\delta\mathrm{d}t+2\sqrt{Y_{t}}\mathrm{d}B_{t},
\]
with 
values \(\delta=0.18\), and \(Y_{0}=0.09\).
Then for the CIR process $X$ given by \[
\mathrm{d}X_{t}=a\mathrm{d}t+bX_{t}\mathrm{d}t+c\sqrt{X_{t}}\mathrm{d}W_{t}
\]
we set \(a=0.045\), \(b=-0.5\), \(c=1\), so that \(\delta=4a/c^{2}=0.18\).  Recall $W$ is the Wiener process defined by 
 \(W_{t}=2\int_{0}^{t}(\sqrt{\exp(bs)}/c)\, \mathrm{d}B_{C(s)}\), where \(C(s)=\left(c^{2}/4b\right)\cdot \left(1-\exp(-bs)\right)\). The expiry time of the option is \(T=10\), the initial value of the asset is \(X_{0}=0.09\), and the strike price is \(K=0.09\).

We use an equidistant partition of the time interval $[0, T]$
into $N$ timesteps of length $h:=T/N$.  
We show the relationship between the length $h$ of the timesteps and the relative errors for the full truncation Euler method, the QE-M method, and direct inversion method, respectively, in Figure \ref{figlogerrorx}. The relative errors decrease with decreasing $h$ for the full truncation Euler method and QE-M method. 
For the direct inversion method, there is no need to discretize, and only one timestep is needed to simulate one asset path if the option is path-independent. Hence
for our direct inversion method, the relative errors are kept
within the order of \(10^{-3}\) 
for the different lengths of timesteps, showcasing the efficiency
of this method for path-independent options.

In Table \ref{talbex}, we show the CPU times and the maximum length of the timesteps for each of the three methods in order 
to achieve  at least order 
\(10^{-3}\) accuracy in pricing the put option.
As shown in Table \ref{talbex}, to achieve this accuracy, the direct inversion method is the most efficient in terms of CPU times.
 An important reason for this is the large number of timesteps used in the simulation of the CIR process  in the full truncation Euler approximation to achieve the required accuracy. 
The length of timestep is \(1/10\) for the full truncation Euler method, and \(1/4\) for the QE-M scheme. However, as noted above our direct inversion method does not require  discretization and can be performed in a single timestep over the whole time interval $[0, T]$.

\begin{figure}[H] 
\centering 
\includegraphics[width=0.65\textwidth]{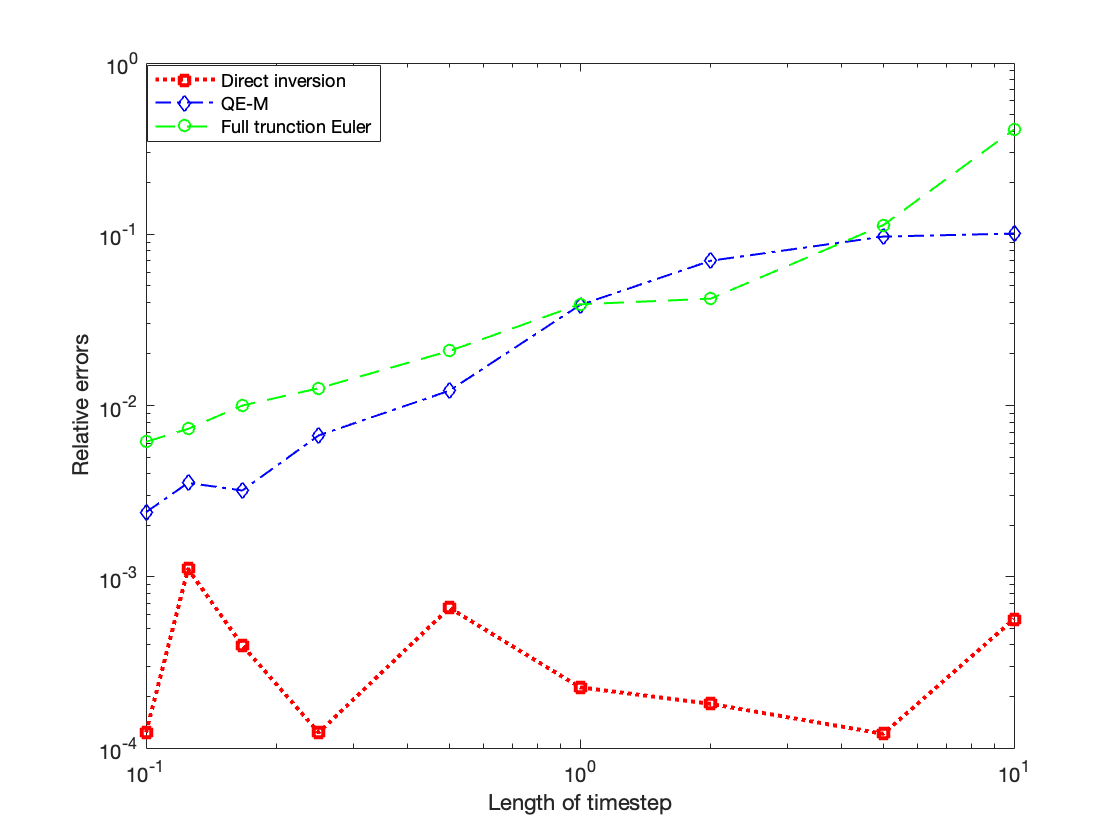} 
\caption{We show the relationship between the relative errors and the length of the timesteps in each simulation path 
for the full truncation Euler method, the QE-M method, and the direct inversion method. The number of simulation paths is \(10^{6}\). }
\label{figlogerrorx} 
\end{figure}
 
\begin{table}[H]
\caption{We show the maximum length of timesteps in each simulation path and the CPU times to price the put option with approximation accuracy 
\(O(10^{-3})\) for the full truncation Euler method,  the QE-M method, and the direct inversion method. We simulate \(10^{6}\) paths in each case. 
}
\label{talbex} 
\centering
\begin{tabular}{|l|l|l|l|}
\hline
               & Full Truncation Euler              & QE-M                          & Direct-Inversion              \\ \hline
Relative Error & \(6.14\times 10^{-3}\) & \(3.85\times 10^{-3}\) & \(3.12\times 10^{-3}\) \\ \hline
Length of timestep 
\(h\)& \(1/10\)                         & \(1/4\)                          & \(10\)                             \\ \hline
CPU time       & \(28.38\)                         & \(6.14\)                          & \(0.42\)                          \\ \hline
\end{tabular}
\end{table}

\begin{remark}
As shown in Figure \ref{figlogerrorx},  when we simulate with \(10^{6}\) paths,  the relative errors of the estimated option price by our direct inversion method is of order \(10^{-3}\), which is essentially caused by the Monte Carlo error. The data for the direct inversion method in Table \ref{talbex} and Figure \ref{figlogerrorx} is generated by the Chebyshev expansion with accuracy \(O(10^{-4})\). 
In Figure \ref{chebyshev}, we show the relative errors and the CPU times of the direct inversion method using the two-dimensional Chebyshev expansion with accuracy \(O(10^{-4})\) and with accuracy \(O(10^{-8})\). The first three data points  (from left to right) are generated with \(10^{4}\), \(10^{5}\), and \(10^{6}\) number of paths, which lead to Monte Carlo errors of order \(10^{-2}, 10^{-2.5}\), and \(10^{-3}\). In these cases, the Chebyshev approximations with accuracy \(O(10^{-4})\) and with accuracy 
\(O(10^{-8})\) 
have comparable 
relative errors, whereas the Chebyshev approximation with accuracy 
\(O(10^{-8})\) requires more CPU time. Since in these cases the Monte Carlo errors dominates the relative errors, the 
higher-accuracy Chebyshev expansion will not have a significant effect on the relative errors. Therefore, in these situations we can reduce the accuracy of the Chebyshev expansion to the same order as the Monte Carlo errors resulting in 
less CPU time while retaining the approximation accuracy.
The last two data points (from left to right) are generated with \(10^{7}\) and \(10^{8}\) number of paths, which lead to Monte Carlo errors of order \(10^{-3.5}\) and \(10^{-4}\), respectively. 
For the Chebyshev approximation with accuracy \(O(10^{-4})\), the relative errors remain at level \(O(10^{-3})\) 
while for the Chebyshev approximation with error \(O(10^{-8})\), the relative errors decrease further to about \(O(10^{-4})\). 
Hence to achieve an efficient approximation, the accuracy of the Chebyshev approximation should be chosen to match the relative Monte Carlo error.
\end{remark}

\begin{figure}[H] 
\centering 
\includegraphics[width=0.65\textwidth]{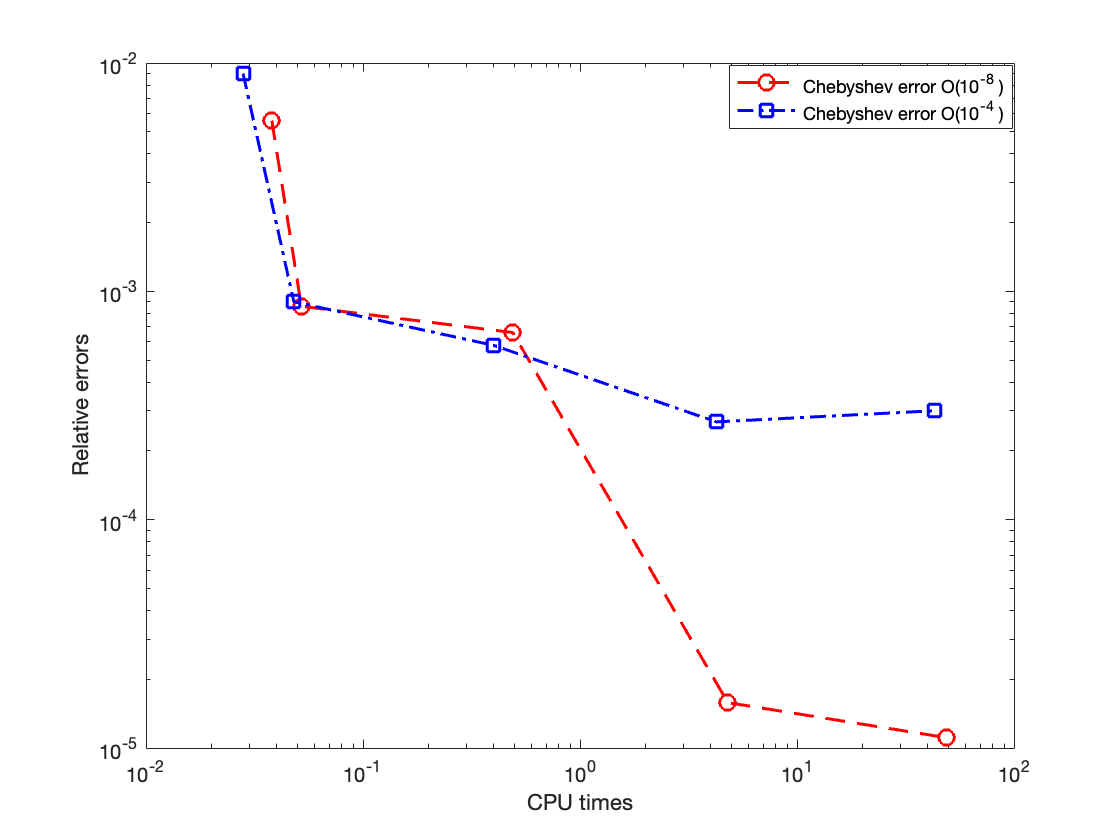} 
\caption{ We compare the relative errors and CPU times of the direct inversion method using the two-dimensional Chebyshev expansion with accuracy \(O(10^{-4})\) and \(O(10^{-8})\).  The number of paths in each simulation is \(\left[10^{4}, 10^{5}, 10^{6}, 10^{7}, 10^{8}\right]\) (from left to right) which lead to Monte Carlo errors of order \(\left[10^{-2}, 10^{-2.5}, 10^{-3}, 10^{-3.5}, 10^{-4}\right]\).  }
\label{chebyshev} 
\end{figure}

\subsection{Path-dependent options: an Asian option}
We apply our direct inversion method to the pricing of a path-dependent option. Here we consider an Asian option with a payoff function as follows: 

\[
\max \left(K-\frac{1}{M}\sum_{m=1}^{M}X_{t_{m}}, 0\right),
\]
where \(K\) is the strike price, and \(X_{t_{m}}\) is the asset price at the monitor times \(0=t_{0},t_{1},\cdots,t_{M}=T\), and \(t_{m}=mT/M\). As above, to simulate the CIR process \(X\) we use our direct inversion method to simulate the squared Bessel process first, and then use the simulated squared Bessel process to generate the CIR process as described in equation
\eqref{eq:CIR}.

We apply the full truncation Euler method, the QE-M method, and the direct inversion method to the pricing of the Asian option with yearly fixings and with quarterly fixings. The parameter values 
are same as for the put option in Section 4.1. 
For both cases of yearly fixings and quarterly fixings, 
the direct inversion method achieves high acuracy with a standard deviation of order $10^{-5}$. The CPU time required compares favourably with the CPU times of the full truncation Euler method and of the QE-M method for the same level 
of accuracy (shown in the last line for each case in Table \ref{tbl:Asian}). Similar to the case of the put option considered above, the reason for this is that the direct inversion method does not require any discretization other than at the monitoring times, while the full truncation Euler method and the QE-M method require a finer sampling rate. For the direct inversion method the monitoring frequency in a path-dependent option determines the times at which the path must be simulated. Thus 
the required monitoring frequency 
may have an effect on the efficiency of the direct inversion method.

\begin{table}[H] 
\caption{We present the estimation results of the full truncation Euler method, the QE-M method and our direct inversion method using \(10^{6}\) paths for an Asian option with yearly fixings and quarterly fixings. In each case the triple shown is the estimated price, the sample standard deviation (inflated by \(10^{3}\)) and CPU time. We test the full truncation Euler methods and the QE-M method with different lengths of timestep \(h\). However, for the direct inversion method, the length of the timestep does not influence the approximation accuracy.  
}

\centering
\label{tab:my-table}
%\resizebox{\textwidth}{!}{%
\begin{tabular}{|l|l|l|l|}
\hline
\multicolumn{1}{|l}{Stepsize} &\multicolumn{1}{l}{Full Truncation Euler}&\multicolumn{1}{l}{QE-M} & \multicolumn{1}{l|}{Direct inversion} \\ \hline
\multicolumn{4}{|l|}{M=10}                                                          \\ \hline
\multicolumn{1}{|l}{h=1}       &\multicolumn{1}{l}{\([0.0366, 0.0473, 6.98]\)} & \multicolumn{1}{l}{\([0.0469, 0.0358, 1.12]\)}&\([0.0464, 0.0341, 4.49]\) \\ 

\multicolumn{1}{|l}{h=1/2}& \multicolumn{1}{l}{\([0.0416, 0.0349, 13.29]\) } & \multicolumn{1}{l}{\([0.0463, 0.0347, 2.58]\) }  & \\
\multicolumn{1}{|l}{h=1/4}   &\multicolumn{1}{l}{\([0.0445, 0.0343, 26.12]\)}&\multicolumn{1}{l}{\([0.0464, 0.0341, 6.88]\)} &                                      \\ \hline
\multicolumn{4}{|l|}{M=40}                                                                \\ \hline
\multicolumn{1}{|l}{h=1/4}  & \multicolumn{1}{l}{\([0.0416, 0.0331, 25.89]\)}     &\multicolumn{1}{l}{\([0.0446, 0.0328, 6.28]\)}     &\([0.0444, 0.0323, 17.89]\)                               \\
\multicolumn{1}{|l}{h=1/8}  & \multicolumn{1}{l}{\([0.0432, 0.0326, 51.22]\)}     &\multicolumn{1}{l}{\([0.0445, 0.0325, 15.89]\)}     &                                  \\
\multicolumn{1}{|l}{h=1/16} & \multicolumn{1}{l}{\([0.0446, 0.0325, 104.62]\)}     &\multicolumn{1}{l}{\([0.0444, 0.0323, 36.28]\)} &                             \\ \hline
\end{tabular}%
%}
\label{tbl:Asian}
\end{table}

\subsection{Basket option}
We consider a basket option with payoff function 
\[
\max\left(K-\sum_{i=1}^{d}w_{i}X^{(i)},0\right),
\]
where \(d\) is the number of assets underlying the basket option , \(w_{i}\) is the weight of each asset, and the underlying asset \(X^{(i)}\) is given as a CIR process
\[
\mathrm{d}X_{t}^{(i)}=\bigl(a_{i}+b_{i}X_{t}^{(i)}\bigr)\mathrm{d}t+c_{i}\sqrt{X_{t}^{(i)}}\mathrm{d}W_{t}^{(i)},
\]
with constants \(a_{i}, b_{i}, c_{i}\) for \(i=1,\cdots, d\).

Recall  $X$ can be expressed as a scaled and time-changed Bessel process \(X_{T}^{(i)}=e^{b_{i}T}Y^{(i)}_{C(T)}\), 
where $Y^{(i)}$ is  a squared Bessel process, and where the function \(C\) is given in equation (\ref{ct}).
We have 
\(Y^{(i)}_{C(T)}=\left(C(T)-C(0)\right)\cdot \chi_{\delta_{i}}^{2}(\lambda_{i})\) with \(\delta_{i}=4a_{i}/c_{i}^{2}\) and \(\lambda_{i}=Y^{(i)}_{C(0)}/\left(C(T)-C(0)\right)\). 
  We use our direct inversion method to price the basket option for different parameter values  \(a_{i}\) \(b_{i}\), and \(c_{i}\). The values for \(a_{i}\), \(b_{i}\), and \(c_{i}\) are shown in  Table \ref{basketoption}. 
  We  consider different values of \(\delta_{i}\), \(b_{i}\), and \(c_{i}\). Since \(a_{i}=\delta_{i}\cdot c_{i}^{2}/4\), here the parameter \(a_{i}\) is treated as a fixed parameter that
depends on the values of \(\delta_{i}\) and \(c_{i}\). 
We set \(T=10\), \(K=0.09\), \(d=5\), \(X_{0}^{(i)}=0.09\), \(Y_{C(0)}^{(i)}=0.09\), and \(w_{i}=1/5\) for \(i=1,\cdots, 5\).

To simulate the CIR process,
 we need to generate the non-central chi-square samples \(\chi_{\delta_{i}}^{2}(\lambda_{i})\). 
Recall we have \(\chi_{\delta_{i}}^{2}(\lambda_{i})=\chi^{2}_{\delta_{i}}+\chi^{2}_{2N_{i}}\) where \(N_{i}\) is a Poisson random variable with mean \(\lambda_{i}/2\). 
 The first central chi-square sample \(\chi^{2}_{\delta_{i}}\) is generated by our direct inversion method using common random numbers for all assets in the basket option, i.e., we generate \(\chi^{2}_{\delta_{1}},\cdots, \chi^{2}_{\delta_{5}}\) from the same Uniform\([0,1]\) random variables. The second central-chi square sample \(\chi^{2}_{2N_{i}}\) is generated with the method introduced in Malham and Wiese \cite{r2} and described in Section \ref{ssec:noncentral}. 
For assets $X^{(i)}$ with parameter combinations 
$(a_i, c_i)$ that result in the same degrees of freedom $\delta= \delta_i=
4a_i/c_i^2$,  
we only need to generate one central chi-square variable
\(\chi_{\delta}^{2}\) that can be used for all of these assets.
 For example, 
 in cases \(1\) and \(2\), we have a fixed value of the degrees of freedom \(\delta_{i}\equiv0.18\), which means we generate one central chi-square sample \(\chi^{2}_{\delta_{i}}\) only and use it in the simulation of all five assets in the basket option. In cases \(3\) and \(4\), the value of the degrees of freedom \(\delta_{i}\) for each asset is different. Hence, we generate new central chi-square samples 
\(\chi^{2}_{\delta_{i}}\) for each asset, resulting in longer CPU times as shown in Table \ref{basketoption}.
 The CPU times in cases 1 and 2 are nearly the same, whereas the value of parameter \(b_{i}\) varies. 
This indicates our method is robust and not affected by the choice of value for the parameter \(b_{i}\). By comparing cases 3 and 4, we also see that the change in value of parameter \(c_{i}\) does not affect the efficiency of our direct inversion method. 

 For comparison with our direct inversion method we choose the most favourable case for the full truncation Euler method of using the same driving Brownian motion for all five assets underlying the basket option. We choose the discretization timestep \(h=1/10\) for the full truncation Euler method to simulate the assets \(X_{t_{1}}^{(i)},\cdots, X_{T_{N}}^{(i)} \) with \(N=T/h\) to achieve an approximation accuracy comparable with our direct inversion method. The prices of the basket options  are slightly different when using the full truncation Euler method compared with the direct inversion method. This is because the same seeds are used to generate the random variables driving the asset price processes in the full truncation Euler method. For the direct inversion method, only the first central chi-square sample \(\chi^{2}_{\delta_{i}}\) are generated by the same seeds; the second central chi-square sample \(\chi^{2}_{2N_{i}}\) cannot be generated with the same seeds because different values of parameters will lead to different 
means of the Poisson variables \(N_{i}\). 
 The CPU times in the five cases are nearly identical for the full truncation Euler method. Because the asset processes have different parameter values, every asset path needs to be generated by the full truncation Euler method (although with the same seed).  
 In all five cases, our direct inversion method compares favourably in the CPU times, especially for case 1 and 2 with fixed degrees of freedom \(\delta_{i}\).
 
\begin{table}[H]
\caption{We present the estimation results of the full truncation Euler method and the direct inversion method using \(10^{6}\) paths for a basket option with five assets. In each case the triple shown is the estimated price, the sample standard deviation (inflated by \(10^{3}\)) and CPU times.  
}

\centering
\label{basketoption}
\begin{tabular}{l|l}
\hline\hline
                             \multicolumn{2}{l}{  Case 1    }                                                                    \\ \hline
Degree of freedom \( \delta\)  &   \( \delta_{i}=0.18 \),    for \(i=1,\cdots, 5\) \\ \hline
Free parameters            & \begin{tabular}[c]{@{}l@{}}\( b_{i}=-0.5 \),    for \(i=1,\cdots, 5\)\\\([c_{1}, c_{2}, c_{3}, c_{4},c_{5}] =[0.8,0.9,1,1.1,1.2]\)\end{tabular}        \\ \hline
Fixed parameter & \([a_{1},a_{2},a_{3},a_{4},a_{5}]=[0.0288,   0.0365,  0.0450,  0.0545,   0.0648] \)\\\hline

Full truncation Euler method  &  \([0.0666, 0.0350, 10.46]\)                                                                                 \\ \hline
Direct inversion method       &\([0.0690,0.0344,1.81] \) \\ \hline\hline
     
\multicolumn{2}{l}{  Case 2    }                                                                    \\ \hline
Degree of freedom \( \delta\)  &   \( \delta_{i}=0.18 \),    for \(i=1,\cdots, 5\) \\ \hline
Free parameters            & \begin{tabular}[c]{@{}l@{}} \([b_{1},b_{2},b_{3},b_{4},b_{5}]=[-0.4,-0.45,-0.5,-0.55,-0.6] \)\\ \([c_{1}, c_{2}, c_{3}, c_{4},c_{5}] =[0.8,0.9,1,1.1,1.2]\) \end{tabular}        \\ \hline
Fixed parameter & \([a_{1},a_{2},a_{3},a_{4},a_{5}]=[0.0288,   0.0365,  0.0450,  0.0545,   0.0648] \) \\\hline
Full truncation Euler method  &  \([0.0666, 0.0349, 10.39]\)                                                                                 \\ \hline
Direct inversion method       &\([0.0692,0.0344,1.90] \) \\ \hline\hline

\multicolumn{2}{l}{ Case 3        }                                                                \\ \hline
Degree of freedom \( \delta\)  &   \( [\delta_{1},\delta_{2},\delta_{3},\delta_{4},\delta_{5}] =[0.1152,   0.1458,  0.1800,  0.2178,   0.2592 ]
\)    \\ \hline

Free parameters & \( b_{i}=-0.5 \), \(c_{i}=1\),    for \(i=1,\cdots, 5\)\\\hline
Fixed parameter           &  \([a_{1},a_{2},a_{3},a_{4},a_{5}]=[0.0288,   0.0365,  0.0450,  0.0545,   0.0648]\)        \\ \hline

Full truncation Euler method  &  \([0.0642, 0.0354, 10.65]\)                                                                                 \\ \hline
Direct inversion method       &\([0.0676,0.0346, 2.59] \) \\ \hline\hline
\multicolumn{2}{l}{Case 4  }  
\\ \hline
Degree of freedom \( \delta\)  &   \( [\delta_{1},\delta_{2},\delta_{3},\delta_{4},\delta_{5}] =[ 0.2250,   0.2000, 0.1800,  0.1636,   0.1500]\)    \\ \hline

Free parameters            & \begin{tabular}[c]{@{}l@{}} \( b_{i}=-0.5 \),    for \(i=1,\cdots, 5\)\\ \([c_{1},c_{2},c_{3},c_{4},c_{5}]=[0.8,0.9,1,1.1,1.2] \)  \end{tabular}      \\ \hline
Fixed parameter & \(a_{i}=0.045\),   for \(i=1,\cdots, 5\)\\\hline

Full truncation Euler method  &  \([0.0649, 0.0350, 10.76]\)                                                                                 \\ \hline
Direct inversion method       &\([0.0679,0.0344, 2.69] \) \\ \hline
\end{tabular}
\end{table}

\section{Conclusion}
In this paper, we designed a new direct inversion method to generate non-central chi-square samples and hence to simulate the squared Bessel process. The method is based on a two-dimensional Chebyshev approximation of the inverse distribution function, where the degrees of freedom of the chi-square sample are treated as one of the variables. Hence 
we can use the same series of Chebyshev coefficients to generate chi-square samples for a range of values for the degrees of freedom, making our method particularly efficient in such a setting. The direct inversion method is efficient and of high accuracy in generating chi-square samples for all degrees of freedom, particularly for the case of small degrees of freedom. High accuracy here means using a Chebyshev expansion with accuracy of order \(10^{-8}\).
In principle, our method can be extended to a higher order, such as the 
order \(10^{-10}\) accuracy Chebyshev expansion in Malham and Wiese \cite{r2}. As a direct inversion method, the accuracy of the method 
does not depend on the length of the timestep in path simulations of the squared Bessel process, an important factor for the efficiency of the method. 
 
 We list some directions for future research. Firstly, since we have 
treated the degrees of freedom $\delta$, or equivalently the dimension of the squared Bessel process, as a variable in our Chebyshev expansion,
we can 
 analyse the impact the model paramater $\delta$ has on the model and on the method. 
Secondly, we can use the simulated squared Bessel process to simulate other stochastic models, such as the CEV model. We can extend our two-dimensional Chebyshev expansion to the simulation of the Heston model, see also Malham, Shen, and Wiese \cite{malham2021series} who established direct inversion algorithms for this model. Also see Malham and Wiese \cite{malham2014efficient}.
In particular, it will be interesting to explore our method for the calibration of the CIR process and of the Heston model, see also Cui, Rollin, and Germano \cite{cui2017full} for a calibration algorithm based on the Heston characteristic function.
 Thirdly, we can consider extending our direct inversion method from the one-dimensional squared Bessel process to its matrix-valued extension, the Wishart process. 
 
\clearpage

\appendix
\section{Appendix}

\begin{table}[H]
\caption{ The Chebyshev coefficients for the first region with \(u\in[0,0.1]\) and \(\delta\in[0.1,0.2]\) to achieve \(O(10^{-8})\) approximation accuracy.
}

\centering
\label{coefficient_region1}
%\resizebox{\textwidth}{!}{%
\begin{tabular}{|c| r r r |}
\hline
		 & \multicolumn{1}{c}{\(0\)}		   &  \multicolumn{1}{c}{\(	1	\)} & \multicolumn{1}{c|}{\(2\)}   \\ \hline
 \(	0	\)  &  \(	0.0015173224201204 	 \)   &  \(	0.0002528722215717 	\) & \(	-0.0000114009960766 			 \)  \\
\(	1	\)  &  \(	0.0028185551650815 	 \)   &  \(	0.0004053700919312 	\) & \(	-0.0000259448708964 			 \)  \\
\(	2	\)  &  \(	0.0022599358279589 	 \)   &  \(	0.0001678268383677 	\) & \(	-0.0000289961611465 			 \)  \\
\(	3	\)  &  \(	0.0015680342698270 	 \)   &  \(	-0.0000700741494637 	\) & \(	-0.0000215492684447 			 \)  \\
\(	4	\)  &  \(	0.0009459043966500 	 \)   &  \(	-0.0002038468732949 	\) & \(	-0.0000031908343145 			 \)  \\
\(	5	\)  &  \(	0.0005000148204097 	 \)   &  \(	-0.0002183501508953 	\) & \(	0.0000155516275132 			 \)  \\
\(	6	\)  &  \(	0.0002342786936571 	 \)   &  \(	-0.0001632918130127 	\) & \(	0.0000245736522666 			 \)  \\
\(	7	\)  &  \(	0.0000986660876581 	 \)   &  \(	-0.0000957800587457 	\) & \(	0.0000226176037315 			 \)  \\
\(	8	\)  &  \(	0.0000378469221872 	 \)   &  \(	-0.0000462238395762 	\) & \(	0.0000151912638058 			 \)  \\
\(	9	\)  &  \(	0.0000133301455983 	 \)   &  \(	-0.0000188923951304 	\) & \(	0.0000079916328208 			 \)  \\
\(	10	\)  &  \(	0.0000043196173257 	 \)   &  \(	-0.0000067083849095 	\) & \(	0.0000034157501336 			 \)  \\
\(	11	\)  &  \(	0.0000012906319703 	 \)   &  \(	-0.0000020933533018 	\) & \(	0.0000012177333308 			 \)  \\
\(	12	\)  &  \(	0.0000003609030015 	 \)   &  \(	-0.0000005962987155 	\) & \(	0.0000003726832304 			 \)  \\
\(	13	\)  &  \(	0.0000000998616155 	 \)   &  \(	-0.0000001639159899 	\) & \(	0.0000001012588862 			 \)  \\ \hline
& \multicolumn{1}{c}{\(3\)}		   &  \multicolumn{1}{c}{\(	4	\)} & \multicolumn{1}{c|}{\(5\)}   \\ \hline
\(	0	\)  &  \(	0.0000009598467591 	 \)   &  \(	-0.0000001024227037 	\) & \(	0.0000000122727754 			 \)  \\
\(	1	\)  &  \(	0.0000020896532161 	 \)   &  \(	-0.0000002151292692 	\) & \(	0.0000000253707092 			 \)  \\
\(	2	\)  &  \(	0.0000026001630603 	 \)   &  \(	-0.0000002531116714 	\) & \(	0.0000000283544586 			 \)  \\
\(	3	\)  &  \(	0.0000030696202825 	 \)   &  \(	-0.0000003242049038 	\) & \(	0.0000000349527349 			 \)  \\
\(	4	\)  &  \(	0.0000025977096176 	 \)   &  \(	-0.0000003843685990 	\) & \(	0.0000000452799900 			 \)  \\
\(	5	\)  &  \(	0.0000007618438592 	 \)   &  \(	-0.0000003286102337 	\) & \(	0.0000000524580734 			 \)  \\
\(	6	\)  &  \(	-0.0000015917811532 	 \)   &  \(	-0.0000000962454381 	\) & \(	0.0000000424043844 			 \)  \\
\(	7	\)  &  \(	-0.0000030850135229 	 \)   &  \(	0.0000002177742833 	\) & \(	0.0000000100802685 			 \)  \\
\(	8	\)  &  \(	-0.0000031377435380 	 \)   &  \(	0.0000004295129270 	\) & \(	-0.0000000360000115 			 \)  \\
\(	9	\)  &  \(	-0.0000022459349254 	 \)   &  \(	0.0000004469913959 	\) & \(	-0.0000000647989030 			 \)  \\
\(	10	\)  &  \(	-0.0000012247484604 	 \)   &  \(	0.0000003229989112 	\) & \(	-0.0000000647743390 			 \)  \\
\(	11	\)  &  \(	-0.0000005277189063 	 \)   &  \(	0.0000001662032187 	\) & \(	-0.0000000454779202 			 \)  \\
\(	12	\)  &  \(	-0.0000001871835099 	 \)   &  \(	0.0000000742134452 	\) & \(	-0.0000000243260099 			 \)  \\
\(	13	\)  &\(	-0.0000000544287724 	 \)   &  \(	0.0000000250134666 	\) & \(	-0.0000000097375780 			 \)  \\\hline
\end{tabular}%
%}
\end{table}

\begin{table}[H]
\caption{ The Chebyshev coefficients for the middle region with \(u\in[0.1,1]\) and \(\delta\in[0.1,0.2]\) to achieve \(O(10^{-8})\) approximation accuracy.
}

\centering
\label{coefficient_region1}
%\resizebox{\textwidth}{!}{%
\begin{tabular}{|c| r r r r|}
\hline
		 & \multicolumn{1}{c}{\(0\)}		   &  \multicolumn{1}{c}{\(	1	\)} & \multicolumn{1}{c}{\(2\)} & \multicolumn{1}{c|}{\(	3	 \)}  \\ \hline
\(	0	\) & \(	0.3875945631074440 	\) & \(	0.0026135656106857 	\) & \(	0.0000129352935356 	\) & \(	-0.0000000381872506 	\) \\
\(	1	\) & \(	0.4956109866348150 	\) & \(	0.0008953566937339 	\) & \(	-0.0000067054601666 	\) & \(	-0.0000000125588670 	\) \\
\(	2	\) & \(	0.1197964723594020 	\) & \(	-0.0027935601066155 	\) & \(	-0.0000166388049599 	\) & \(	0.0000000730650209 	\) \\
\(	3	\) & \(	-0.0010763253182188 	\) & \(	-0.0009194311880306 	\) & \(	0.0000085829238628 	\) & \(	0.0000001335013439 	\) \\
\(	4	\) & \(	-0.0023865415596642 	\) & \(	0.0001990751020187 	\) & \(	0.0000035794963500 	\) & \(	-0.0000000521728813 	\) \\
\(	5	\) & \(	0.0004865330909205 	\) & \(	0.0000201874711514 	\) & \(	-0.0000020552522115 	\) & \(	-0.0000000169016304 	\) \\
\(	6	\) & \(	-0.0000106448822976 	\) & \(	-0.0000192517056677 	\) & \(	0.0000002053193871 	\) & \(	0.0000000182461114 	\) \\
\(	7	\) & \(	-0.0000205224390034 	\) & \(	0.0000042306504504 	\) & \(	0.0000001643278424 	\) & \(	-0.0000000050681230 	\) \\
\(	8	\) & \(	0.0000062842657989 	\) & \(	0.0000000655856784 	\) & \(	-0.0000000840329782 	\) & \(	-0.0000000005964520 	\) \\
\(	9	\) & \(	-0.0000007508885322 	\) & \(	-0.0000003315298971 	\) & \(	0.0000000156952882 	\) & \(	0.0000000010012989 	\) \\
\(	10	\) & \(	-0.0000001157724046 	\) & \(	0.0000001086938654 	\) & \(	0.0000000020326053 	\) & \(	-0.0000000003785164 	\) \\
\(	11	\) & \(	0.0000000777284278 	\) & \(	-0.0000000140930457 	\) & \(	-0.0000000022683390 	\) & \(	0.0000000000505659 	\) \\ \hline

\end{tabular}
\end{table}

\begin{table}[H]
\caption{The Chebyshev coefficients for the tail region with \(u\in[1,\infty]\) and \(\delta\in[0.1,0.2]\) to achieve \(O(10^{-8})\) approximation accuracy.
}

\centering
\label{coefficient_region1}
%\resizebox{\textwidth}{!}{%
\begin{tabular}{|c| r r r r|}
\hline
		 & \multicolumn{1}{c}{\(0\)}		   &  \multicolumn{1}{c}{\(	1	\)} & \multicolumn{1}{c}{\(2\)} & \multicolumn{1}{c|}{\(	3	 \)}  \\ \hline

\(	0	\) &\(	6.8753214515317200 	\) & \(	0.0154069212636548 	\) & \(	-0.0000209599465735 	\) & \(	-0.0000000864515313 	\) \\
\(	1	\) &\(	8.5433821730433800 	\) & \(	0.0090024260671873 	\) & \(	-0.0000377801621097 	\) & \(	-0.0000000303548005 	\) \\
\(	2	\) &\(	3.4476658871478400 	\) & \(	-0.0130690823892008 	\) & \(	0.0000022840388153 	\) & \(	0.0000000362919660 	\) \\
\(	3	\) &\(	0.9330390647038300 	\) & \(	-0.0087472864187297 	\) & \(	0.0000347919528143 	\) & \(	0.0000000562970750 	\) \\
\(	4	\) &\(	0.1741822695325730 	\) & \(	-0.0023271850423403 	\) & \(	0.0000189337983037 	\) & \(	-0.0000000495179004 	\) \\
\(	5	\) &\(	0.0231441815282366 	\) & \(	-0.0002445078701620 	\) & \(	0.0000029857164456 	\) & \(	-0.0000000266197129 	\) \\
\(	6	\) &\(	0.0027835177479455 	\) & \(	-0.0000075113568411 	\) & \(	-0.0000003271927323 	\) & \(	0.0000000028475004 	\) \\
\(	7	\) &\(	0.0004380896611385 	\) & \(	-0.0000109856591271 	\) & \(	0.0000000145476511 	\) & \(	0.0000000029205683 	\) \\
\(	8	\) &\(	0.0000474342757583 	\) & \(	-0.0000033230341686 	\) & \(	0.0000000850337842 	\) & \(	-0.0000000002011847 	\) \\
\(	9	\) &\(	-0.0000039453038556 	\) & \(	0.0000003821206797 	\) & \(	-0.0000000061400030 	\) & \(	-0.0000000040928856 	\) \\
\(	10	\) &\(	-0.0000006163799253 	\) & \(	0.0000001912780830 	\) & \(	-0.0000000172232771 	\) & \(	0.0000000086484494 	\) \\
\(	11	\) &\(	0.0000004418045229 	\) & \(	-0.0000000406903687 	\) & \(	-0.0000000104246286 	\) & \(	0.0000000086001421 	\) \\
\(	12	\) &\(	0.0000000377188891 	\) & \(	-0.0000000175539467 	\) & \(	-0.0000000096591111 	\) & \(	0.0000000085524035 	\) \\
\(	13	\) &\(	-0.0000000336347029 	\) & \(	0.0000000003271218 	\) & \(	-0.0000000102016363 	\) & \(	0.0000000085537387 	\) \\
\hline
   \end{tabular}
\end{table}
\clearpage
%\nocite{*}
%\bibliographystyle{apalike}
\bibliographystyle{abbrv}

\bibliography{ref}
\end{document}